\definecolor{grn}{RGB}{0,150,0}
\begin{document}

%

\title{Cache Aided Communications with Multiple Antennas at Finite SNR}
%
%
%

\newcommand{\E}{\mathbb{E}}

\author{Itsik~Bergel and~Soheil~Mohajer
}

\maketitle

\begin{abstract}
We study the problem of cache-aided communication for cellular networks with multi-user and multiple  antennas at finite signal-to-noise ratio. Users are assumed to have non-symmetric links, modeled by wideband fading channels. We show that the problem can be formulated as a linear program, whose solution provides a joint cache allocation along with pre-fetching and fetching schemes that minimize the duration of the communication in the delivery phase.  The suggested scheme uses zero-forcing and cached interference  subtraction and hence allow each user to be served at the rate of its own channel. Thus, this scheme is better than the previously published schemes that are compromised by the poorest user in the communication group. We also consider a special case of the parameters for which we can  derive a closed form solution and formulate the optimal power, rate  and cache optimization. This special case shows that the gain of MIMO coded caching goes  beyond the throughput. In particular, it is shown that in this case, the cache is used to balance the users such that fairness and throughput are no longer contradicting. More specifically, in this case, strict fairness is achieved jointly with maximizing the network throughput.


\end{abstract}

\begin{IEEEkeywords}
Cache-aided communication, MIMO, Finite SNR regime, MIMO, Cache and Power allocation, 
Linear optimization, 
Zero-forcing. 
\end{IEEEkeywords}

%
\IEEEpeerreviewmaketitle

\section{Introduction}

Network traffic has rapidly increased, over both wired and wireless networks, in recent years. This overwhelming growth is mostly due to the demands for broadband data. In particular, video delivery accounts for a major growth of traffic on both mobile  \cite{cisco2017global} and wireline networks \cite{sandvine14}. Two unique characteristics of 
of video contents are \textsf{(i)} popular files are repeatedly requested by multiple users; and \textsf{(ii)} unlike general web usage, video request has a prime time.
These unique properties provide an opportunity for storing the data at local caches during the off-peak hours of the network, and serve a request at the peak hours\cite{huston1999web}. 


In a pioneering work \cite{maddah2014fundamental}, Maddah-Ali and Niesen showed that caching gain is not limited to the local cache size at individual users. More importantly, caching a packet at User $1$, even if it is only requested by another User $2$, provides an opportunity for \emph{multicasting combined packets}, which can simultaneously serve both Users $1$ and $2$. It is shown that this scheme offers a global gain which scales with the aggregate size of the caches distributed across all  users in the network. 

This scheme was further generalized to multiple transmit antennas \cite{shariatpanahi2017multi}, where they showed that a network can achieve $N+M$ degrees of freedom (DoF), where $N$ is the number of antennas and $M$ is the number of copies of  complete dataset stored across over all users. This is a  significant gain, in contrast to only $N$ DoF, achievable with $N$ antennas and no caching. 
This potential gain is substantial, even in spite of the current trend of massive MIMO \cite{larsson2014massive, lu2014overview}, which calls for the use of antenna arrays with many elements: especially due to the cost of antennas arrays to be deployed. 
In contrast, use of a small cache at each mobile comes with very low cost, and these memories easily scale up to a total size that can offer significant gains.


Nevertheless, the existing works in this area only focus on isolated scenarios, or limited to DoF characterization (i.e., asymptotically high SNR). Thus, there is a big gap to cover before we can understand and fully exploit the role of cache-enabled communication in cellular networks. 
In practice, users are located at different distances and are subject to power attenuation, and fading. The optimal use of caching in such a multi-antenna scenario is still unknown. 


In this work, we demonstrate the achievability of the caching gain in the finite SNR regime and present closed form expressions for the performance in a special case. These results demonstrate a fascinating phenomenon, in which the cache contributes both for throughput and  fairness. Recalling that in general fairness in wireless networks comes at the price of reduced throughput, the proposed scheme brings a situation in which strict fairness is achieved through maximizing the total network throughput. In other words, the caching allows a natural balancing of the load between the users. Thus, caching brings two distinct advantages: it increases the network throughput  and  it balances between the data rate of the different users to improve fairness. 

A short illustrative example that demonstrate the core of the proposed scheme is given in Section~\ref{sub: Example}.

\subsection{Related works}
\emph{Coded caching} \cite{maddah2014fundamental} is a novel data delivery technique to exploit the aggregate cache in the network rather than individual memory available at each user. 
In general, we have a network with $U$ users and a set of $F$ files at the server, all of the same (normalized) length. Each user is equipped with a storage memory to store a fraction of the packets of each file during the \emph{placement phase}. Cache placement occurs prior to users' requests, and hence it is performed regardless and independent of the requests.  At the beginning of the delivery phase, each user requests for one file from the dataset,  and the server broadcasts a message (a sequence of packets) to simultaneously serve all user requests. That is, each user $u$ should be able to recover its desired file  from the received message and its cached information. The ultimate goal is to minimize the duration of time required to serve all users.


A \emph{global caching gain} can be realized when a single transmission can serve multiple users: A packet requested by user $u$ and not cached at his local memory can be \emph{combined} with any other packet cached at the user, since it allows the user  to null the \emph{interference} using its local memory. Such a combined packet is simultaneously useful for some other user $u'$, if it contains a packet requested by $u'$, and all other interfering components are cached in $u'$.

Several interesting bounds have been proposed to fully characterize the rate-memory tradeoff of coded caching \cite{ghasemi2017improved, sengupta2015improved, wang2016new, ajaykrishnan2015critical, tian2015note}. Under ideal assumptions and uncoded prefetching, it is shown that the proposed scheme of \cite{maddah2014fundamental} is information-theoretically optimal for some range of parameters \cite{wan2016optimality}. Optimality of (a slightly modified version of) this scheme is proved in \cite{yu2016exact} for arbitrary parameters and for both average and worst-case demand scenarios.

In general, the caching gain can be improved by allowing coded pre-fetching (referring to jointly coding across files to be placed at users’ cache), at the price of complexity of the system \cite{chen2014fundamental, tian2016caching, sahraei2016k, amiri2017fundamental}. In spite of developing
better achievability schemes, and several efforts in tightening the outer bound of the rate-memory tradeoff for caching with general placement \cite{wang2016new, wang2017improved, yu2017characterizing}, the problem is still not fully solved. 
One of the  main advantages of uncoded pre-fetching is a rather simple handling of practically-relevant asynchronous demands, without increasing the communication rates \cite{maddah2015decentralized}, and hence we only focus on uncoded placement in this work.


Recently, the attention of the community has been shifted towards the practical aspects of coded caching, and their adoption in wireless networks. In particular, \cite{bidokhti2017gaussian, gregori2015joint, yang2016content, bidokhti2016noisy,  bidokhti2016erasure, ghorbel2016content} study coded caching in wireless networks in the presence of fading and/or erasure channels. Coded caching in wireless networks with multiple antennas at transmitters and/or receivers is considered in \cite{yang2016content, shariatpanahi2016multi, shariatpanahi2017multi, ngo2018scalable}. In particular, a homogeneous (with statistically identical channel gains) MISO network is considered in  \cite{ngo2018scalable}, where a mixed communication scheme is proposed to combine spatial multiplexing and multicasting, and improve the gain as the number users grows.

Employment of coded caching in wireless networks, and in particular in cellular networks, requires addressing several practical issues. In a realistic system, each user has a channel with different statistics and capacity. Cache allocation should be optimized  depending on network traffic, user's channel quality, user's available storage, and other network characteristics. 
Coded caching for heterogeneous networks with different channels and rates for users (in the delivery phase) is studied in \cite{ibrahim2017optimization} for networks with single transmit and receive antennas. In \cite{ibrahim2017optimization}, each packet transmission is subject to the rate of the weakest user, among those supposed to decode the packet. 

One of the fundamental distinctions of our work is the exploitation of spatial diversity for the delivery phase. In particular, joint coding of the packets can be performed over-the-air, instead of at the transmitter: The transmitter sends different packets along various spatial directions. Each end-user will receive a combination of the transmit packets. The interfering packets are either nulled over the air by zero-forcing, or suppressed at the receiver using the cache content. 

Hence, the rate of each packet is only limited by the channel capacity of the intended user. A rather similar phenomena is observed in the single antenna case, by using multiple nested codebooks \cite{bidokhti2017benefits, bidokhti2018state, bidokhti2017gaussian}. 
In MIMO setting, however, this can be done naturally, since each message is sent along a different spatial direction, and users can suppress the effect of the undesired but cached messages from the received signal, even before the decoding process starts. 
This is an important characteristics of MIMO caching systems, which is further elaborated below.

The main contribution of this paper is the design of a cache aided communication scheme that serves each user at its own rate. This is done by using spatial multiplexing (instead of multicasting) and hence does not require each packet to be sent at the rate of the weakest user. The proposed system is DoF optimal, but gives significant advantage over previous methods were the rate of the users are different (typically at low and medium signal to noise ratio). We also derive a closed form solution and formulate the optimal power, rate  and cache allocation for a special case of the parameters. 

 Our results indicate a significant improvement in the system throughput due to jointly optimizing cache, power and rate allocation. This is in contrast to the result of \cite{naderializadeh2017optimality}, where it is shown that a separate design of the caching and delivery is order-wise optimal. However, it is worth noting that while we have \emph{total cache size constraint}, the setting in \cite{naderializadeh2017optimality} associated a fixed and uniform cache size to each user, and hence its result does not directly apply to our setting.

The remainder of the paper is organized as follows. In Section~\ref{sec:sysmodel} we present the system model. In Section~\ref{sec: Caching optimization} we formulate that caching optimization problem as a linear program (LP). A closed form solution for the problem for some special range of parameters is presented in Section~\ref{sec: N+M}, followed by 
some numerical results that illustrate the gain offered by caching and our proposed resource allocation method in Section~\ref{sec:NumericalResults}. Finally, we finish the paper by some concluding remarks in Section~\ref{sec: conclusion}.

\section{System model}\label{sec:sysmodel}
\subsection{Network and channel model}
We consider a single cell network with one base station (BS) which is serving $U$ users. The BS has $N_\mathrm{T}$ antennas and each user is equipped with  $N_\mathrm{R}$ antennas. 
We assume a strict fairness setup, in which each user requests   exactly one file, and all files are of the same size (i.e., all users require exactly the same amount of data). We further assume a wideband
communication scheme, in which the bandwidth is divided into $B$ small frequency bins. Symbols are transmitted at the rate of $R_\textrm{S}$ symbols per second, where at each symbol time  one symbol is modulated over each frequency bin without inter symbol interference (e.g., OFDM). Thus, the transmission bandwidth is approximately $B\cdot R_\textrm{S}$.

Considering the time duration of a single symbol, the received sample after matched filtering for the $m$-th frequency bin at $i$-th user is described by an $N_\mathrm{R}\times 1$ vector, given by
\begin{IEEEeqnarray}{rCl}
\mathbf{y}_{i,m}=\mathbf{H}_{i,m}\mathbf{x}_m+\mathbf{w}_{i,m},
\end{IEEEeqnarray} 
where $\mathbf{H}_{i,m} \in \mathbb{C}^{N_\mathrm{R}\times N_\mathrm{T}}$ is the channel matrix between the BS and the $i$-th user in frequency bin $m$, which contains the gain from each BS antenna to each antenna of user $i$, $\mathbf{x}_m$ is the transmitted vector at this frequency bin and $\mathbf{w}_{i,m}\sim \mathcal{CN}(0,\sigma^2 \mathbf{I})$ is the additive complex white Gaussian noise.

We assume a very limited movement for the users during the transmission block. Thus, $r_i$, the distance between the $i$-th user to the BS, does not change. On the other hand, due to small movements of the users, and movements of other objects in the area, each link between two antennas experiences fading. Thus, the channel matrix for the $m$-th frequency bin can be written as:
\begin{IEEEeqnarray}{rCl}\label{e:H_eq_rG}
\mathbf{H}_{i,m}=\sqrt{r_i^{-\alpha}}\cdot \mathbf{G}_{i,m}
\end{IEEEeqnarray}
where $\alpha$ is the path-loss exponent and $\mathbf{G}_{i,m}$ is a random matrix that represents the fading. We consider a rich scattering environment, and hence each link experiences an independent Rayleigh fading. In mathematical terms, we assume that each element of  $\mathbf{G}_{i,m}$ is a proper complex normal random variable, with zero mean and unit variance, and that all elements of the matrices  $\mathbf{G}_{i,m}$ for $i=1,\ldots,U$ are statistically independent. Note that we do not assume any specific model for the frequency dependence of the fading. Yet, we will later assume that the bandwidth is large enough, so that the aggregate rate over all frequency bins mimics  the expected rate.

\subsection{Transmission scheme}

The BS simultaneously transmits different messages to different users. Transmission of some of the requested messages can be ignored if the messages are already stored in the users' cache, as will be detailed later. For other messages, the BS needs to make sure that the transmission of an undesired message will not interfere with the reception of the desired user at each active user. In this work we consider a sub-optimal transmission scheme where the BS transmits each message in a way that causes no interference at all to a group of $N-1$ users. This approach is commonly termed Zero Forcing (ZF) precoding or more specifically, block-diagonalization \cite{spencer2004zero}. To allow this scheme, we assume that the number of transmit and receive antennas satisfy $N_\mathrm{T}=N\cdot N_\mathrm{R}$.

While the block-diagonalization scheme adopted in this work is suboptimal, one should note that it has many merits. On one hand, this scheme is known to asymptotically achieve the optimal DoF in a multi-user MIMO scenario \textcolor{black}{\cite{spencer2004zero}}. On the other hand, it requires a low implementation complexity, and hence is quite popular for practical implementations. Specifically to our work, it is convenient as it results in user rates that are independent of the other users rate and channel (as will be shown below).

To apply the block diagonalization constraint, we use a projection matrix, $\mathbf{P}_{\mathcal{Z}_{i,m}}^\perp$, that projects to the null-space of the channel matrices of $N-1$ selected users, and $\mathcal{Z}_{i,m}$ is the set of users that should not be disturbed by the transmission to user $i$, that is, a transmit message to user $i$ over frequency bin $m$ should be zero-forced at users in $\mathcal{Z}_{i,m}$. Thus, the effective channel of user $i$ at frequency bin $m$ is $\mathbf{H}_{i,m} \mathbf{P}_{\mathcal{Z}_{i,m}}^\perp$, and its achievable rate is:
\begin{IEEEeqnarray}{rCl}
\tilde R_{i,m}= R_\textrm{S} \log_2 \left| \mathbf{I}+\frac{P_i }{\sigma^2}p_{i,m}\mathbf{H}_{i,m} \mathbf{P}_{\mathcal{Z}_{i,m}}^\perp\mathbf{H}_{i,m} ^H \right|,
\end{IEEEeqnarray}
where $|\cdot|$ denotes matrix determinant, $P_i$ is the inter-user power allocation for user $i$, and $p_{i,m}$ is the intra-user power allocation for the $m$-th frequency bin, i.e, $p_{i,m}P_i$ is the effective power allocated to user $i$ in frequency bin $m$. We will assume throughout that the intra-user power  allocation is normalized to $1$ ($\E[ p_{i,m}]=1 \ \forall i$), and the inter-user power allocation is subject to a sum-power constraint, $\sum_i P_i\le P$.

\subsection{Performance evaluation}
We assume that each user can decode its desired message without interference from other messages that were simultaneously transmitted by the BS (i.e., each interfering message is either zero forced by the BS or subtracted using the cache available at the receiver). Thus, the achievable rate for user $i$ is given by:

\begin{IEEEeqnarray*}{rCl}
\tilde R_i&=&\sum_m\tilde R_{i,m}
= \sum_m R_\textrm{S} \log_2 \left|\mathbf{I}+\frac{P_i}{\sigma^2}p_{i,m}\mathbf{H}_{i,m} \mathbf{P}_{\mathcal{Z}_{i,m}}^\perp\mathbf{H}_{i,m} ^H \right|.
\end{IEEEeqnarray*}
Assuming that the bandwidth is \emph{large enough}, it will contain enough fading variations so that we can apply the law of large numbers. Let denote by $B$  the number of frequency bins. Thus, for sufficiently large $B$ the user rate will converge to its expectation:
\begin{IEEEeqnarray}{rCl}\label{e: rate with h and p}
\frac{\tilde R_i}{B}\hspace{-2pt}\rightarrow\hspace{-2pt}\frac{R_i}{B}\hspace{-2pt}=\hspace{-2pt} R_\textrm{S}  \E \left[\log_2 \left|\mathbf{I}\!+\!\frac{P_i}{\sigma^2}p_{i,m}\mathbf{H}_{i,m} \mathbf{P}_{\mathcal{Z}_{i,m}}^\perp\mathbf{H}_{i,m} ^H \right|\right]\!.\ \ \ 
\end{IEEEeqnarray}
Substituting \eqref{e:H_eq_rG} into \eqref{e: rate with h and p}, we have:
\begin{IEEEeqnarray}{rCl}\label{e: final rate}
R_i=B  R_\textrm{S}  \E \left[\log_2 \left|\mathbf{I}+\frac{P_i r_i^{-\alpha}}{\sigma^2}p_{i,m}\mathbf{G}_{i,m} \mathbf{P}_{\mathcal{Z}_{i,m}}^\perp\mathbf{G}_{i,m} ^H \right|\right].
\IEEEeqnarraynumspace
\end{IEEEeqnarray}
For a fixed user $i$ with given $P_i$ and $r_i$, the random quantities $p_{i,m}$, $\mathbf{G}_{i,m}$ and $\mathbf{P}_{\mathcal{Z}_{i,m}}^\perp$ only depend on the channel fading (i.e., matrices $\{\mathbf{G}_{i,m}\}$). Thus, the expectation in \eqref{e: final rate} (which is taken with respect to the fading) depends only on $P_i$ and $r_i$. Hence, we conclude that the user rate depends only on its distance to the BS, $r_i$, and its allocated power, $P_i$. In this setup, it is convenient to characterize each user solely by its achievable rate, $R_i$. 

As an example, in the single receive antenna case  ($N_\mathrm{R}=1$), the product $\mathbf{G}_{i,m} \mathbf{P}_{\mathcal{Z}_{i,m}}^\perp\mathbf{G}_{i,m} ^H$ is a rank-1 matrix (indeed it is an scalar), and its single eigenvalue (denoted as $\rho_{i,m}$) has a standard exponential distribution. 
If we also assume constant intra-user power allocation ($p_{i,m}=1$), the user rate will be
\begin{IEEEeqnarray}{rCl}
R_i&=&B\cdot  R_\textrm{S}  \cdot\E \left[\log_2 \left(1+\eta_i \rho_{i,m}\right)\right]\nonumber\\
&=&-B R_\textrm{S} \log_2 e \cdot e^{1/\eta_i}\cdot \mbox{Ei}(-1/\eta_i),
\end{IEEEeqnarray}
where $\eta_i=\frac{P _ir_i^{-\alpha}}{\sigma^2}$ is the average SNR and $\mbox{Ei}(\cdot)$ is the exponential integral function, defined as $\mbox{Ei}(x)= -\int_{-x}^\infty \frac{e^{-t}}{t} dt$.

\subsection{Caching}
A cache-aided communication scheme includes two phases, namely, placement phase and delivery phase. There is a database of $F$ files, each of a unit length, available at the BS, and each user is interested in one of the files. Each user has an allocated cache, to pre-store some part of the database. During the placement phase, the users' caches are filled with messages (packets) from the database, while the users' requests are not yet revealed. After the placement phase, upon revealing users' demands, the BS transmits a proper set of packets  in order to serve all the users with their desired files. The placement phase occurs in the off-peak time of the network, in order  to improve the communication in the peak-time. Even though our analysis can be applied on general demand profile, in this work we consider the worst case demand scenario, in which users request  distinct files (and consequently, we assume $F\geq U$).


The BS can decide on the best allocation of cache to users, subject to a total cache constraint of $M\cdot F$ units distributed over all users. This optimization  allows the BS to place larger cache at users with lower rates (poor channel conditions) and hence reduce the total transmission time of the BS.

We assume that each user is equally likely to request any file. Thus, without loss of generality, we can simplify the problem by assuming that each user will store similar parts of all files, and hence, the cache contents of the users is invariant under a permutation of the files. Thus, the caching problem reduces to finding the optimal cache placement and the optimal sequence of transmissions that will deliver the desired files to all users in minimal time. 

Note that the cache allocation problem does not depend on many of the systems parameters described above. In fact, it turns out that if the number of files is not less than the number of users  ($F\geq U$), the optimum solution for the caching problem only depends on the number of users $U$, the spatial multiplexing dimension $N$ (the number of users that can be simultaneously served with no interference using only the selected MIMO scheme), the number of copies of the database that are distributed across users' cache $M$, and the communication rates supported by the channel $\{R_i\}_i$.

\section{Caching optimization}\label{sec: Caching optimization}
A  cache aided communication scheme needs to specify which part of each file to be pre-fetched at each user (during the placement phase), and afterwards, given the user requests, what is the transmission scheme that can satisfy the requests of all users (during the delivery phase), i.e., what parts of what files should be jointly transmitted at each stage so that all users will be able to decode all their desired packets. In Subsection~\ref{sub: linear program} we show that this problem can be formulated as a linear optimization problem, and hence can be solved efficiently using linear programming methods. Before that, we give a simple example that illustrates the operation of a valid transmission scheme. 
\subsection{A Simple Example}\label{sub: Example}
Consider a MISO broadcast channel with $N_\mathrm{T}=2$ transmit antennas and  $U=3$ users with single antenna ($N_\mathrm{R}=1$), as shown in Fig.~\ref{fig:MISO}.  We assume a total cache constraint, so that only one copy of each (packet of each) file can be pre-fetched among all the users, i.e., $M=1$. We denote the fraction of files to be cached at user $i$ by $q_i$, which implies $q_1+q_2+q_3=M=1$. Recall that the cache placement is invariant under file relabeling, and hence, $q_i$ faction of  each file in the dataset should be pre-fetched in user $i$. We denote the link capacity of user $i$ by $R_i$. Assume Users 1 and 2 have good channels to support $R_1=R_2=2$ and the third user is further away from the transmitter and can only decode at rate of $R_3=1$.  

\begin{figure}[t!]
    \centering
    \subfloat[File partitioning]{\includegraphics[width = 1.8in]{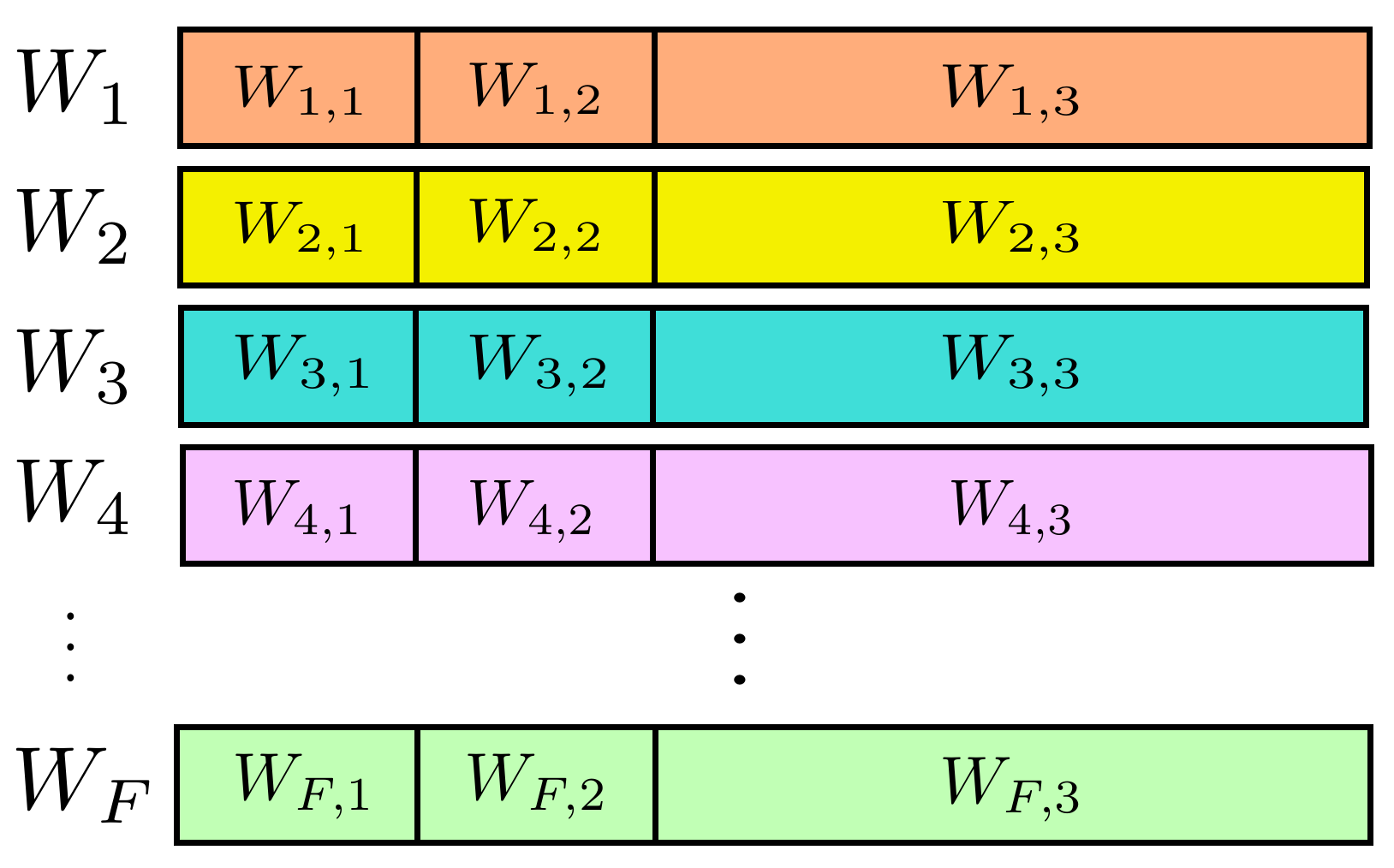}}
    \qquad\qquad
    \subfloat[Cache placement]{\includegraphics[width = 3in
    ]{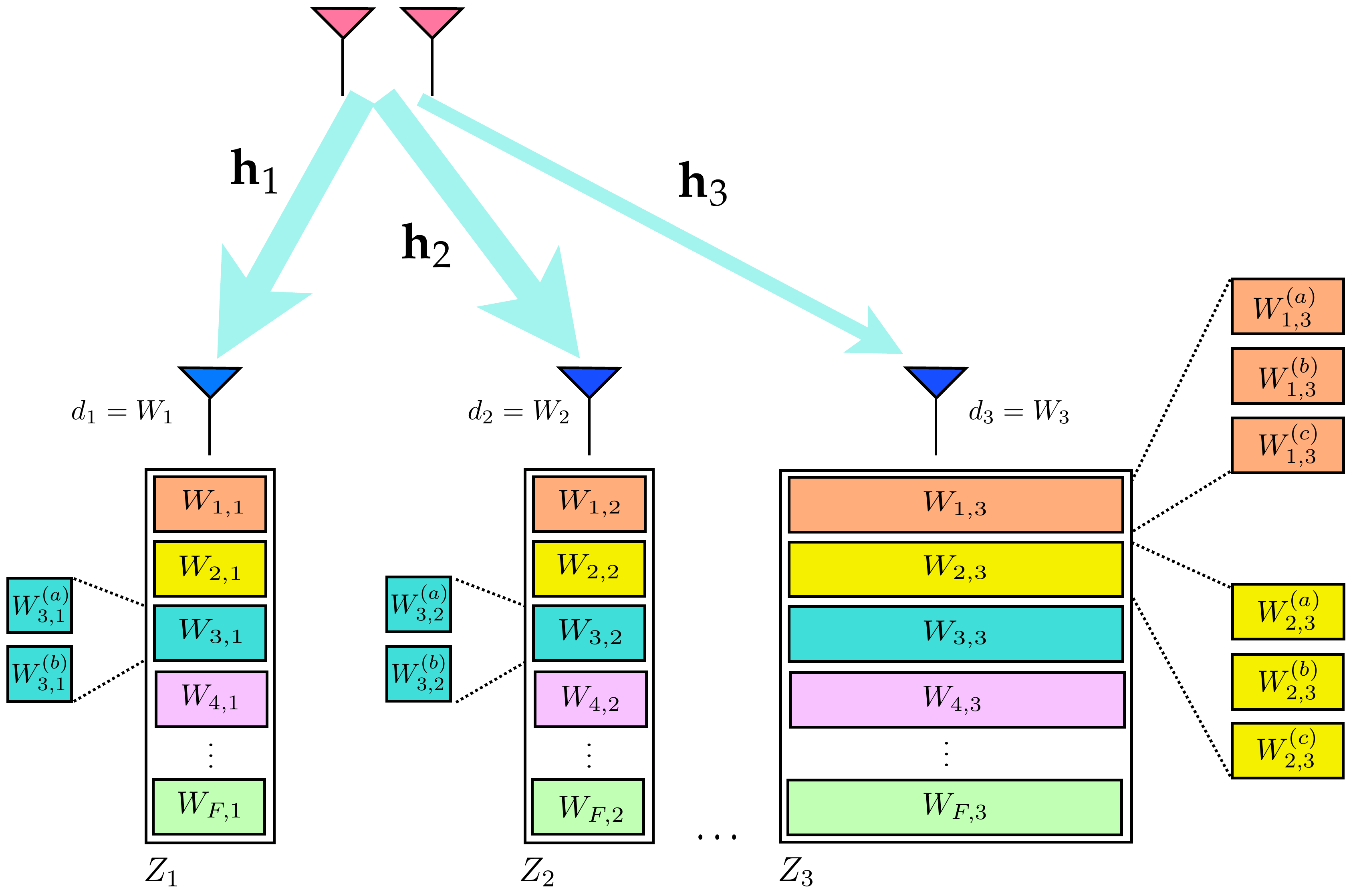}}
    \caption{Cache optimization based on user channel quality}\label{fig:MISO}
\end{figure}

It turns out that the optimum cache allocation to compensate for the weakness of User~3 is $q_1=q_2=1/5$ and $q_3=3/5$. The cache allocation is done by partitioning each file into $3$ \emph{sections}, namely, $W_{k,1}$, $W_{k,2}$, and $W_{k,3}$, which are stored at the cache of Users $1$, $2$, and $3$ respectively, as shown in Fig.~\ref{fig:MISO}(b). The length of cached sub-files will be $\left| W_{k,1} \right|= \left| W_{k,2} \right|=1/5$ and $\left| W_{k,3} \right|=3/5$ (recall that the file lengths are normalized, and hence $1/5$ refers of $1/5$ of a the actual length of the files). 
Note that the cache placement is performed prior to the users request, and hence is identical for all files. In this example we assume that User~1 requested $d_1=W_1$,  User~2 requested $d_2=W_2$ and User~3 requested $d_3=W_3$. 

The delivery phase includes broadcasting $4$ messages. To formally present the broadcast messages, we need to uniformly divide some of the cached sections into smaller \emph{segments} as $W_{3,1}=\left(W_{3,1}^{(a)},W_{3,1}^{(b)}\right)$,  $W_{3,2}=\left(W_{3,2}^{(a)},W_{3,2}^{(b)}\right)$, $W_{1,3}=\left(W_{1,3}^{(a)},W_{1,3}^{(b)},W_{1,3}^{(c)}\right)$, and $W_{2,3}=\left(W_{2,3}^{(a)},W_{2,3}^{(b)},W_{2,3}^{(c)}\right)$, to keep up with the capacity of the links to the users.  Then, each section $W_{k,\cdot}$ and segment  $W_{k,\cdot}^{(\cdot)}$ will be coded to a sequence $\mathbf{s}_{k,\cdot}$ and $\mathbf{s}_{k,\cdot}^{(\cdot)}$, respectively, using a channel code of rate $R_k$. Hence, the length of the resulting sequences will be 
$\left| \mathbf{s}_{1,2}\right|= \left|  \mathbf{s}_{2,1} \right|=\left|\mathbf{s}_{k,\cdot}^{(\cdot)}\right| = \frac{1}{R_k}\left| W_{k,\cdot}^{(\cdot)} \right|  = \frac{1}{10}$.

The sequences needed by the users for successfully decoding their requested files are
\begin{align*}
\centering
\mathrm{User~1: }& \mathbf{s}_{1,2}, \mathbf{s}_{1,3}^{(a)},  \mathbf{s}_{1,3}^{(b)},  \mathbf{s}_{1,3}^{(c)},\\
\mathrm{User~2: }& \mathbf{s}_{2,1}, \mathbf{s}_{2,3}^{(a)},  \mathbf{s}_{2,3}^{(b)},  \mathbf{s}_{2,3}^{(c)},\\
\mathrm{User~3: }& \mathbf{s}_{3,1}^{(a)}, \mathbf{s}_{3,1}^{(b)},  \mathbf{s}_{3,2}^{(a)},  \mathbf{s}_{3,2}^{(b)}.
\end{align*}

All three users can be served  by transmitting:
\begin{align}
\begin{split}
\mathbf{x}(1) &=  \mathbf{h}_2^\perp \mathbf{s}_{1,3}^{(a)} + \mathbf{h}_1^\perp \mathbf{s}_{2,3}^{(a)} +  \mathbf{h}_2^\perp \mathbf{s}_{3,1}^{(a)}, \\
\mathbf{x}(2) &=  \mathbf{h}_3^\perp \mathbf{s}_{1,2} + \mathbf{h}_1^\perp \mathbf{s}_{2,3}^{(b)} +  \mathbf{h}_2^\perp \mathbf{s}_{3,1}^{(b)}, \\
\mathbf{x}(3) &=   \mathbf{h}_2^\perp \mathbf{s}_{1,3}^{(b)} + \mathbf{h}_3^\perp \mathbf{s}_{2,1} + \mathbf{h}_1^\perp \mathbf{s}_{3,2}^{(a)},\\
\mathbf{x}(4) &=  \mathbf{h}_2^\perp \mathbf{s}_{1,3}^{(c)}
+ \mathbf{h}_1^\perp \mathbf{s}_{2,3}^{(c)}  
+\mathbf{h}_1^\perp \mathbf{s}_{3,2}^{(b)},
\end{split}
\end{align}
in four time slots, where $\mathbf{x}(t)$ is beam-forming transmission vectors for the $t$-th time block, which takes $1/10$ time slots. The notation $\mathbf{h}_i^\perp \mathbf{s}_{k,j}$ indicates that all symbols of the codeword $\mathbf{s}_{k,j}$ are precoded over all frequency bins and several symbols, and each pre-coding vector at the $m$-th frequency bin is perpendicular to the $i$-th user channel, $\mathbf{h}_{i,m}$.

Let us consider file retrieval at User~1. For instance, in time  block $t=1$ and frequency bin $m$, User~1 receives $y_{1,m}(1) = \mathbf{h}_{1,m}  \mathbf{x}_m(1) +w_m(1) = \mathbf{h}_{1,m} \mathbf{h}_{2,m}^\perp (\mathbf{s}_{1,3,m}^{(a)}+ \mathbf{s}_{3,1,m}^{(a)}) +w_m(1)$. It removes $\mathbf{s}_{3,1,m}^{(a)}$ using its cache, and then uses the remaining signal to decode $W_{1,3}^{(a)}$. Similarly, each  user can decode all the missing sections of its requested file. 

Note that each transmission takes $1/10$  time slots, and hence the total transmission time is $4/10$, after which, all users have their requested files.  
A total of $3$ files (each of unit length) are delivered to the users, where the network delivered  a total of $4/5+4/5+2/5=2$ files and the remaining sections were already stored at the cache of requesting users. Thus, the throughput of the network is $2/0.4=5$. In contrast, in a similar setting with only single-antenna transmitter, the rate of each packet intended for a subset of users including User~3 should not exceed $R_3=1$. This shows that an optimized coded caching in MISO \emph{offers more gain than just trading antennas vs. cache memory\footnote{Note that many works (e.g., \cite{shariatpanahi2017multi}) evaluate the rate based on the total delivered files (including the parts already cached at the users during the placement phase). In such terminology, the throughput of this network is $3/0.4=7.5$, as $3$ files are delivered in $4/10$ time slots. We use the net throughput in our work in order to emphasize the relation to the physical rates, i.e., the network throughput is $R_1+R_2+R_3=5$.}}.

This example is further illustrated in Subsection III-B, using the terminology of an optimization problem (see Equation (11) and the preceding paragraph).
\hfill  $\square$

\subsection{Cache-aided communication as an optimization problem}\label{sub: linear program}
We next derive a mathematical framework that can describe a cache-aided communication scheme, and show that it can be formulated as a linear programming problem. We focus only on \emph{efficient} transmission schemes, where we define an 'efficient' transmission as one that exploits all degrees of freedom  of the channel. In the setup at hand, an 'efficient' communication must serve $M+N$ users simultaneously at all times\footnote{ Using the notation of \cite{shariatpanahi2017multi}, where there are $\tilde{K}$ users, each with cache size of $\tilde{M}$, and a library of $\tilde{N}$ file, the union of the cache across users holds $\tilde{K}\tilde{M}/\tilde{N}$ copies of the entire data base (similar to our $M$). Using $\tilde{L}$ to denote the number of antennas in \cite{shariatpanahi2017multi}, and the total throughput definition, they showed that the per user DoF is $\frac{\tilde{L} + \tilde{K}\tilde{M}/\tilde{N}}{\tilde{K}(1 - \tilde{M}/\tilde{N})}$. Replacing the notation, and also multiplying by $\tilde K$ for sum-DoF and multiplying  by $(1 - \tilde{M}/\tilde{N})$ to change from total throughput to net throughput (see footnote 1) we get a maximal sum-DoF of $N+M$. Thus, according to \cite{shariatpanahi2017multi}, any 'efficient' scheme will serve $M+N$ users at any time of transmission, and is hence DoF optimal.}. This is done by zero forcing each transmission to $N-1$ direction, and allowing $M$ users to subtract the interference using their cache.

Thus, the content of each transmit message in the network must be stored by $M$ users. In other words, each transmission is intended for a combination of $M+1$ users. Furthermore, as these users store the same parts of all files to allow `efficient' transmissions, this specific user combination must use their cache for the transmission of specific file parts of every other user in the same group. To formulate that, we divide each file into $L$ segments, where each segment is stored by $M$ users. As we have a total of $U$ users and segments are stored by $M$ users, the maximal number of needed sections can be bounded by $L\leq {U \choose M}$. We enumerate these sections by $\ell =1, 2, \dots, L$, and describe them by the row vectors $b_\ell$ for $\ell=1,2,\dots, L$, where $ b_\ell(i)=1$ if user $i$ stores the $\ell$-th  section of each file, and $ b_\ell(i)=0$ otherwise.

The length of the $\ell$-th section is denoted by $u_\ell$. Thus, $0\le u_\ell\le 1$ and $\sum_\ell u_\ell=1$. Note that, the total fraction of each file cached at user $i$ is given by $q_i = \sum_{\ell}  b_\ell(i) u_\ell$, which implies
\begin{align*}
\sum_{i=1}^U q_i 
&= \sum_{i=1}^U \sum_{\ell=1}^L  b_\ell(i) u_\ell 
= \sum_{\ell=1}^L \left(u_\ell \sum_{i=1}^U  b_\ell(i)\right) \nonumber\\
&=  \sum_{\ell=1}^L M u_\ell =M,
\end{align*}
which guarantees that a total of $M$ copies of the entire dataset is distributed among all users. Note that the vectors $b_\ell$ describe the different possibilities for file partitioning, and hence are known in advance (and depend only on $M$ and $U$). The actual allocation is determined by the set of variables $u_\ell$'s which needs to be solved according to the available user rates.

The transmission in each time slot involves a combination of $M+N$ out of the $U$ users, which can be simultaneously served. We will use an index $c$ to label possible combinations where $c=1,\dots, C$ and $C={U \choose M+N}$. Furthermore, each user combination can be active in several transmissions, each with different segments of the file transmitted to each user. The different transmissions for the same user combination ($c$) will be indexed by $j$. 

Each of the $M+N$ users that are active in this time slot receives part of their requested file. We will use  matrices $\mathbf{E}^c_j$ to describe the transmission scheme for a slot, where $(\mathbf{E}^c_j)_{\ell,i}=1$ if the $i$-th user receives (part of) the $\ell$-th file section 
at the $j$-th transmission of user combination $c$, and otherwise $(\mathbf{E}^c_j)_{\ell,i}=0$. Thus, $1\le i \le U$, $1\le \ell \le L$ and $0\le c\le C$. We will next discuss the possible values of $\mathbf{E}^c_j$ and hence the maximal number of transmissions for any user combination.

To characterize the matrices $\mathbf{E}^c_j$, we note that each such matrix satisfies the following conditions: 

\begin{enumerate}
\item[(C1)] Each element in the matrix is either zero or one ($ (\mathbf{E}^c_j)_{\ell,i}\in\{0,1\}$).
\item[(C2)] Each user can receive only one segment at a time, and hence, there is at most one $1$ in each column of $\mathbf{E}^c_j$ (i.e., \textcolor{black}{$\sum_\ell (\mathbf{E}^c_j)_{\ell,i}\in\{0,1\}$}).
\item[(C3)] In an `efficient' transmission, at each time slot there are $N+M$ active users and hence, each $\mathbf{E}^c_j$ matrix contains exactly $M+N$ ones ($\sum_\ell \sum_i (\mathbf{E}^c_j)_{\ell,i}=N+M$).
\item[(C4)] As a user does not need a file segment that is already stored in its cache, we must have $(\mathbf{E}^c_j)_{\ell,i}=0$ for any $\ell$ and $i$ such that $b_\ell(i)=1$ (or alternatively stated: $(\mathbf{E}^c_j)_{\ell,i} b_\ell(i)=0$).
\item[(C5)] Only the users that belong to the user combination $c$ will participate in the reception. Thus, all cache storage indicated by the matrix $\mathbf{E}^c_j$ must be of active users. In other words, if user $i$ is not active in the matrix $\mathbf{E}^c_j$ (that is if $\sum_\ell (\mathbf{E}^c_j)_{\ell,i}=0$), then it is also not used for cache storage ($\sum_\ell b_\ell(i) (\mathbf{E}^c_j)_{\ell,i}=0$).
\end{enumerate}

Thus, for a specific combination of $N+M$ out of  $U$ users, each matrix $\mathbf{E}_j^c$ contains $M+N$ ones in  $M+N$ columns associated to the active users. Each one can be selected independently in its column from the allowed locations (where each column represents a user). In order to count the number of allowed locations of a one in a specific column, we note that (C4) requires the vector $b_\ell$ that corresponds to the row with the one must be zero for this user. Thus we need to count the number of vectors $b_\ell$ that has zero for this user. But, (C5) further limits the allowed locations as it requires that all relevant vectors, $b_\ell$, must have their $M$ ones chosen only from the $M+N$ active users. Thus, we need to count the number of vectors that have $M$ ones out of $N+M-1$ users (the users that are active, excluding the considered user, that must be zero). Hence, there are a total of  $\binom{M+N-1}{M}$ choices for the location of one in each column. 

As the choices of the location of one in each column are independent, and there are  $N+M$ columns of active users, in each $\mathbf{E}_j^c$ matrix, we have a total of 
\setcounter{equation}{8}
\begin{IEEEeqnarray}{rCl}
J=\binom{M+N-1}{M}^{(M+N)}
\end{IEEEeqnarray}
possible matrices for each users combination
(i.e., the range of $j$ is given by $1\le j \le J$).
Denoting by  $T^c_j$ the duration of time required to transmit to a user combination $c$ in mode $j$, the total transmission time is given by:
\begin{IEEEeqnarray}{rCl}
T=\sum_{j,c} T^c_j.
\end{IEEEeqnarray}

\noindent{\bf Revisiting the example}: To demonstrate this formulation, consider the example of Section~\ref{sub: Example}. As $M=1$ and $U=3$ there only $L=\binom{U}{M}=3$ file sections, and three vectors that describe their storage at the different users: $b_1=[1 ,0,0]$, $b_2=[0,1,0]$ and $b_3=[0,0,1]$. The cache placement solution tells us the size of the segments are $u_1=u_2=0.2$ and $u_3=0.6$. For the transmission scheme, we note that this case has only $C=\binom{3}{1+2}=1$ user combination, and a total of $J=\binom{1+2-1}{1}^{(1+2)}=8$ transmission schemes. Out of these, the obtained solution uses only $4$ schemes:
\begin{IEEEeqnarray}{rCl}
\begin{split}
\mathbf{E}^1_1&=\left[\begin{array}{ccc}  
0    & 0 &    1 \\
     0 &    0 &    0\\
     1   &  1 &    0
\end{array}
\right]
,\ \mathbf{E}^1_2=\left[\begin{array}{ccc}  
     0   &  0  &   1\\
     1   &  0  &   0\\
     0   &  1 &    0
\end{array}
\right]
,
\\
 \mathbf{E}^1_3&=\left[\begin{array}{ccc}  
     0   &  1 &    0 \\
     0   &  0 &    1 \\
     1   &  0 &    0
\end{array}
\right]
,\ \mathbf{E}^1_4=\left[\begin{array}{ccc}  
     0   &  0 &    0 \\
     0   &  0 &    1 \\
     1   &  1 &    0
\end{array}
\right],
\end{split}
\end{IEEEeqnarray}
and the transmission time of each mode is $T^1_1=T^1_2=T^1_3=T^1_4=0.1$, implying a total transmission time of $T=0.4$.\hfill $\square$

Recall that transmission to User~$i$ is done at rate $R_i$. In order to serve User~$i$ we have to deliver all non-cached sections of the requested file, that is, $\ell$'s with $b_\ell(i)=0$. Noting that the size of the $\ell$ section is $u_\ell$, we have  
\begin{IEEEeqnarray}{rCl}\label{e: E constraint}
\sum_{j,c} T_j^c (\mathbf{E}_j^c)_{\ell,i}=\frac{u_\ell}{R_i}.
\end{IEEEeqnarray}
for each User~$i$ and section~$\ell$ such that $b_\ell(i)=0$. 
Noting that each vector $b_\ell$ has $U-M$ zero elements, \eqref{e: E constraint} yield in a set of $L(U-M)$ constraints.

Thus, the problem can be formulated as a linear programming minimization:
\begin{IEEEeqnarray}{rCl}
\min_{T^c_j,u_\ell} &&  \sum_{j,c} T^c_j
\notag \\
\text{Subject  to} &:& \sum_\ell u_\ell=1
\notag \\
&& \sum_{j,c} T_j^c (\mathbf{E}_j^c)_{\ell,i}=\frac{u_\ell}{R_i} \quad \forall  \ell,i:b_\ell(i)=0
\notag \\
&& T_j^c\ge0, \quad u_\ell\ge 0\quad \forall c,j,\ell
\label{eq:opt-0}
\end{IEEEeqnarray}
The formulation of a linear programming problem allows us to find an optimal scheme using efficient algorithms. The solution of this problem gives both the details of the cache allocation and placement to all users (through the variables $u_\ell$ and equation $q_\ell= \sum_{\ell} b_\ell(i) u_\ell$), and the sequence of transmission that can deliver the desired files to all requesting  users.


Note that the optimization problem above can easily be adjusted for the case of per user cache size constraint, by adding the  constraint: $\sum_{\ell}  b_\ell(i) u_\ell\le u_{\max}$, where $u_{\max}$ is the maximum  cache size. Yet, in this work we do not pursue this approach, and focus only on the global cache size constraint, as described in Section II.

The number of variables ($u_\ell$'s and $T_j^c$'s) in this problem is 
\setcounter{equation}{13}
\begin{IEEEeqnarray}{rCl}
L+C\cdot J=\binom{U}{M}  +\binom{U}{M+N}\cdot{\binom{M+N-1}{M}}^{(M+N)}.
\end{IEEEeqnarray}
This number grows polynomially with the number of users $U$, but exponentially with $M+N$. Thus, the suggested approach is practical for large network as long as the number of DoF ($M+N$) is not large. Further research is necessary to optimize large networks with large number of antennas or large cache.

On the good side, the number of equality constraints of the linear program is only  $L(U-M)+1$. Thus, an optimal solution includes at most $L(U-M)+1$ non-zero variables \cite{boyd2004convex}. This means that the total number of file sections and transmission modes that are needed for the actual implementation is quite small and limited to ${U \choose M}\cdot(U-M)+1$.

As an example, consider a problem with $U=12$ users with (normalized) rates of $R_i=i$ for $i=1,\ldots,12$. Assume that the BS has $N_\mathrm{T}=2$ antennas, and $M=2$ copies of the dataset are distributed among all the users. The resulting linear programming problem  has  $40161$ variables and $661$ constraints. The actual solution divide the files to only $55$ sections, and uses $421$ transmission modes (i.e., the optimal solution resulted in $39674$ transmission modes that were assigned a zero duration of time). 

The total cache allocated ($q_i$) for each user in this scheme, given by $\sum_\ell u_\ell \cdot (b_\ell)_i$, is (sorted from the user with lowest rate to the user with highest rate): $.63$, $.44$, $.29$, $.20$, $.14$, $.10$, $.07$, $.05$, $.03$, $.03$, $.02$, $0$. The total transmission time for the whole transmission is $T=0.51$. 

As a comparison, a standard algorithm (e.g., mimicking \cite{maddah2014fundamental} for multiple antenna case) that does not account for the different rates, will need to adjust each transmission to the active user with lowest rate. Such algorithm will need more than $T=1.22$ to complete all transmissions. This is more than twice slower  then the proposed algorithm. As another comparison,  using this optimal allocation but with $N=1$ BS antenna instead of $2$ requires $T=0.73$ which is only $40\%$ worse than the $N=2$ case, and still much better than the standard method with even $2$ antennas.

\section{The special case of $U=M+N$}
\label{sec: N+M}
While the linear programming approach allows an efficient optimization of the cache aided communication scheme, it is hard to draw insights from it on the properties of the optimal solution. To get some insights, in the next section we analyze the special case of $U=M+N$. 

\subsection{Cache Allocation}
In this special case, all users are active throughout all the transmissions. This  allows for an analytical performance evaluation, as stated in the following theorem.
\newtheorem{theorem}{Theorem}
\newtheorem{proof}{Proof}
\begin{theorem}\label{th: U=M+N}
For the cache-aided communication problem with $U=M+N$, if the rate of each user satisfies
\begin{IEEEeqnarray}{rCl}\label{rate condition}
R_i \le \frac{1}{N}\sum_{u=1}^U R_u,
\end{IEEEeqnarray}
then the minimal time to serve all users is 
\begin{IEEEeqnarray}{rCl}
T=\frac{N}{\sum_u R_u}. 
\end{IEEEeqnarray}
\end{theorem}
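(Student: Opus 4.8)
The plan is to show that the equality constraints of the linear program~\eqref{eq:opt-0} already pin the objective down to a single value, so that proving the theorem reduces to a feasibility question governed entirely by condition~\eqref{rate condition}. First I would exploit the degeneracy created by $U=M+N$: there is a unique user combination (so $C=\binom{U}{M+N}=1$ and I may drop the index $c$), and by (C2)--(C3) each matrix $\mathbf{E}_j$ carries exactly $M+N=U$ ones with at most one per column, forcing every column to contain precisely one $1$. Hence every user is served in every transmission. Fixing a user $i$ and summing the delivery constraint~\eqref{e: E constraint} over all sections $\ell$ (only those with $b_\ell(i)=0$ contribute, by (C4)), the left-hand side collapses to $\sum_j T_j\sum_\ell (\mathbf{E}_j)_{\ell,i}=\sum_j T_j=T$, while the right-hand side is $\tfrac{1}{R_i}\sum_{\ell:b_\ell(i)=0}u_\ell=\tfrac{1-q_i}{R_i}$, using $q_i=\sum_\ell b_\ell(i)u_\ell$ and $\sum_\ell u_\ell=1$. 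This yields the key identity $R_i T = 1-q_i$ for every $i$.

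Summing this identity over all users and invoking $\sum_i q_i = M$ (established in Section~\ref{sub: linear program}) gives $T\sum_u R_u = U-M = N$, i.e. $T=N/\sum_u R_u$, independently of the particular feasible point. This single computation plays the role of both the converse and the value: no feasible scheme can beat $N/\sum_u R_u$, because every feasible scheme attains exactly it. The same identity also forces the cache allocation to be $q_i = 1 - R_i N/\sum_u R_u$, which is admissible (requires $q_i\ge 0$) if and only if $R_i\le \tfrac{1}{N}\sum_u R_u$, i.e. precisely~\eqref{rate condition}. Thus the rate condition is not merely sufficient but exactly characterizes nonemptiness of the feasible region.

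It then remains to prove achievability, namely that under~\eqref{rate condition} a feasible scheme indeed exists. I would set $q_i = 1 - R_i N/\sum_u R_u$ and observe that $q_i\in[0,1]$ with $\sum_i q_i = M$, so $q$ lies in the convex hull of the $\{0,1\}$-vectors with exactly $M$ ones, which are exactly the storage vectors $b_\ell$ (the hypersimplex). This membership delivers section sizes $u_\ell\ge 0$ with $\sum_\ell u_\ell=1$ and $q_i=\sum_\ell b_\ell(i)u_\ell$. Finally I would realize the transmissions by decomposing the demand matrix $D_{\ell,i}=u_\ell/R_i$, supported on the pairs with $b_\ell(i)=0$, into valid modes: each column of $D$ sums to $\tfrac{1-q_i}{R_i}=T$, so $D/T$ has unit column sums on the allowed support. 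Writing each column of $D/T$ as a convex combination of single-entry columns and taking the induced product combination over columns expresses $D/T=\sum_j \lambda_j \mathbf{E}_j$ with $\lambda_j\ge 0$ and $\sum_j\lambda_j=1$; setting $T_j=T\lambda_j$ recovers the schedule. Here I would stress that (C5) is automatically satisfied because all users are active, so the only restriction on each column is (C4), which is exactly the support condition of $D$.

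The main obstacle is the achievability half, and within it the existence of valid $u_\ell$: the converse and value computation is essentially a one-line summation of the equality constraints, whereas achievability needs the two structural facts, that $q$ sits inside the hypersimplex spanned by the $b_\ell$, and that the resulting $D$ admits a column-wise mode decomposition. I would expect the delicate bookkeeping to be verifying that the constructed matrices $\mathbf{E}_j$ and times $T_j$ simultaneously meet all of (C1)--(C5) and reproduce~\eqref{e: E constraint} for every pair $(\ell,i)$, after which the claimed value $T=N/\sum_u R_u$ is attained.
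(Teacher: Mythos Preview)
Your argument is correct and, for the converse/value computation, essentially identical to the paper's: both derive $R_iT=1-q_i$ from the fact that every user is active in every transmission when $U=M+N$, sum over users using $\sum_i q_i=M$ to obtain $T=N/\sum_u R_u$, and read off the rate condition from $q_i\ge 0$.

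Where you differ is in the achievability half. The paper handles this informally: it simply asserts that one can place $q_u$ of each file at user~$u$ with every packet stored at exactly $M$ users, and that the delivery then decouples across users because each packet is cached at $M$ receivers (who cancel it) and zero-forced at the remaining $N-1$. You instead supply an explicit construction: the hypersimplex argument guarantees the existence of section sizes $u_\ell$ realizing the prescribed $q_i$, and the column-wise product decomposition of $D/T$ manufactures the schedule $(\mathbf{E}_j,T_j)$ satisfying all of (C1)--(C5). This is not a different route so much as a rigorous completion of what the paper sketches; it buys you a proof that actually closes, at the cost of invoking two standard but nontrivial structural facts (the vertex description of the hypersimplex, and the tensor-product trick for independent column decompositions) that the paper sidesteps by appealing to the operational picture.
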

\begin{proof}[Proof of Theorem~\ref{th: U=M+N}]
In the case that $U=M+N$ all `efficient' transmissions must include all users. Thus the total transmission time for each user equals  $T$. Recall that $q_u$ fraction of the file requested by User~$u$ is pre-stored in its cache. Hence, the time required to deliver the remaining $1-q_u$ fraction satisfies
 \begin{IEEEeqnarray}{rCl}\label{c:T}
1-q_u= T  R_u .
\end{IEEEeqnarray}
In addition, the total cache allocated across users  satisfies:
\begin{IEEEeqnarray}{rCl}\label{c:q}
\sum\nolimits_{u=1}^U q_u = M.
\end{IEEEeqnarray}
If the optimization problem in \eqref{eq:opt-0} has a feasible solution, it must  satisfy \eqref{c:T} and \eqref{c:q}. Substituting \eqref{c:T} in \eqref{c:q} gives:
\begin{align*}
\sum_{u=1}^U (1-q_u) = (N+M) - \sum_{u=1}^U q_u= T \sum_{u=1}^U R_u
\end{align*}
which implies 
\begin{align*}
T\geq \frac{N+M - \sum_{u=1}^U q_u}{\sum_{u=1}^U R_u} =\frac{N}{\sum_{u} R_u} .
\end{align*}
However, this solution can be feasible only if the resulting $q_i$'s are feasible. That is,
\begin{align*}
q_u = 1- TR_u \geq 0 
\end{align*}
which is  equivalent to $R_u \leq \frac{1}{T} = \frac{1}{N} \sum_{u=1}^U R_u$. 



The achievability of this result stems from the observation that this scheme is significantly simpler than other caching schemes in the sense that the transmission to each user can be optimized separately. The cache placement in this case only needs to satisfy two simple requirements: 1) Exactly $q_u$ of each file in the database should be stored at user $u$, and 2) The cache content at each user has no overlaps. The optimum transmission scheme always sends to \emph{all} the users simultaneously, and interference management is performed over each individual stream: since each requested packet exists at exactly $M$ users' cache, these users can suppress  the interference using their cache content. Thus, each packet just need to be zero-forced at the $N-1$ users that do not store this packet in their cache.
\end{proof}

\subsection{Power Allocation}
The result of Theorem~\ref{th: U=M+N} implies that a network with $U=N+M$ users can achieve a total throughput of\footnote{Taking into account the files stored in cache, the throughput is defined as $(U-M)/T$ where $T$ is the time needed to complete transmission for all users.} $\sum_{u=1}^{M+N}R_u$. This suggests that the well known water-filling algorithm will be appropriate for throughput optimization. However, the results above also include a condition that the maximal rate should not exceed $\frac{1}{N}\sum_{u=1}^{M+N}R_u$. An intuitive justification for this constraint is the following: A user with a rate that is higher than $\frac{1}{N}\sum_{u=1}^{M+N}R_u$  will be completely served before the other users. Hence, for the remaining transmission time, there are less than $N+M$ active users in the system, and we cannot fully exploit the available DoF of the network.  

This result also represents a fascinating balancing mechanism that brings a natural balance between throughput and fairness. The issue of throughput maximization vs. user fairness has accompanied the field of wireless communication for decades. In most cases, enforcing  fairness reduces the total throughput, and the typical working point is selected as a trade-off between the two. 

In this work, the problem is stated with a strict fairness constraint: each user must receive the same amount of data ($1$ file). Yet, due to the caching, the optimal performance is achieved by maximizing the total throughput. Thus, the caching  allowed a natural balancing of the load between the users. Hence, caching brings two distinct advantages: (1) it increases the network throughput by allowing a simultaneous transmission of $N+M$ users, and, (2) it balances between the data rate of the different users to improve fairness. Note that the second property is obtained  mostly by placing larger cache to poor users, which reduces their communication needs.

The maximum rate constraint represents the cases in which maximal throughput cannot be jointly achieved with the complete fairness. In such cases, the system  needs to allocate more power to poor users in order to further increase their rate and achieve fairness. At the power allocation level, we can allocate the total power among the users to guarantee achievability of the maximum throughput. This can be formally stated as an individual optimization problem stated in \eqref{e:Optimization_problm}.
\begin{align}
\begin{split}
\max_{P_1,\ldots, P_U} &\sum_{u=1}^U R_u
\label{e:Optimization_problm}\\
\textrm{Subject to:  }&R_k\le \frac{1}{N}\sum_{u=1}^U R_u \quad \forall \  k 
\end{split}
\end{align}

The solution for this problem can be obtained by a small adjustment of the standard water-filling algorithm. For simplicity, the algorithm is described only for the case of single antenna per user. Let the power level be denoted by $\rho$. For convenience, we sort the users by their distance from the BS (such that $r_U\le r_{U-1}\ldots\le r_1$) and denote the effective channel gain by $\eta_{i,m}=\mathbf{G}_{i,m} \mathbf{P}_{\mathcal{Z}_{i,m}}^\perp\mathbf{G}_{i,m} ^H$. The optimum water-filling power and rate allocations are given by:
\begin{align}
\begin{split}
P_i p_{i,m}&=\left(\rho-\frac{\sigma^2}{r_i^{-\alpha}\eta_{i,m}}\right)_+,
\\
R^{\textrm{W}}_i&=\E\left[ \log_2 \left(1+\frac{P_i p_{i,m}r_i^{-\alpha}\eta_{i,m}}{\sigma^2}\right)\right],
\end{split}
\end{align}
where $x_+ = \max(x,0)$ is the positive part of $x$. 
If these rates do not satisfy the maximal rate constraint, we need to determine which users will meet the constraint with equality. Noting that these users will always be the users closest to the BS, we just need to determine the number of users for which the constraint will be active. Denoting the number of such users by $h$,  these users will use the constraint rate, $R_{\max}(h)$. Thus, the maximal allowed rate will be
\begin{IEEEeqnarray*}{rCl}
R_{\max}(h)
&=&\frac{1}{N}\sum_{u=1}^{U-h}R^{\textrm{W}}_u+\frac{h}{N}R_{\max}(h)
=\frac{1}{N-h}\sum_{u=1}^{U-h}R^{\textrm{W}}_u
\end{IEEEeqnarray*}
and we need to find  $h$ such that
\begin{IEEEeqnarray}{rCl}
&&R^{\textrm{W}}_i\le R_{\max}(h)\quad \forall \ i<U-h,
\\
&&R^{\textrm{W}}_i\ge R_{\max}(h)\quad \forall \ i\ge U-h.
\end{IEEEeqnarray}
It is easy to show that there will always exist exactly one value of $0\le h\le N-1$ that satisfies both inequalities.

After determining $h$, we can find the power required by each user, and hence the sum power of the BS. Iterating over the initial power level will give the appropriate power level that matches the total power $P$ (noting that the total power is monotonically increasing with the level of the water $\rho$). 

\section{Numerical Results}
\label{sec:NumericalResults}


In this section we present numerical results to  better illustrate the gain obtained by cache-aided communication scheme and the proposed optimization framework. The performance presented here are based on Monte Carlo simulation with $1000$ network realizations per point. Each Network realization consists of a random positioning of the $U=M+N$ users independently and uniformly over a circular area of radius $r_{\max}$. The channel gain were evaluate using the distances to the BS, and a random generation of $B=100$ Rayleigh fading variables per user. In all simulations we considered single antenna users ($N_\mathrm{R}=4$) and a BS with $N_\mathrm{T}=N=4$ antennas.

\begin{figure}
  \begin{center}
\includegraphics[width=0.45\textwidth]{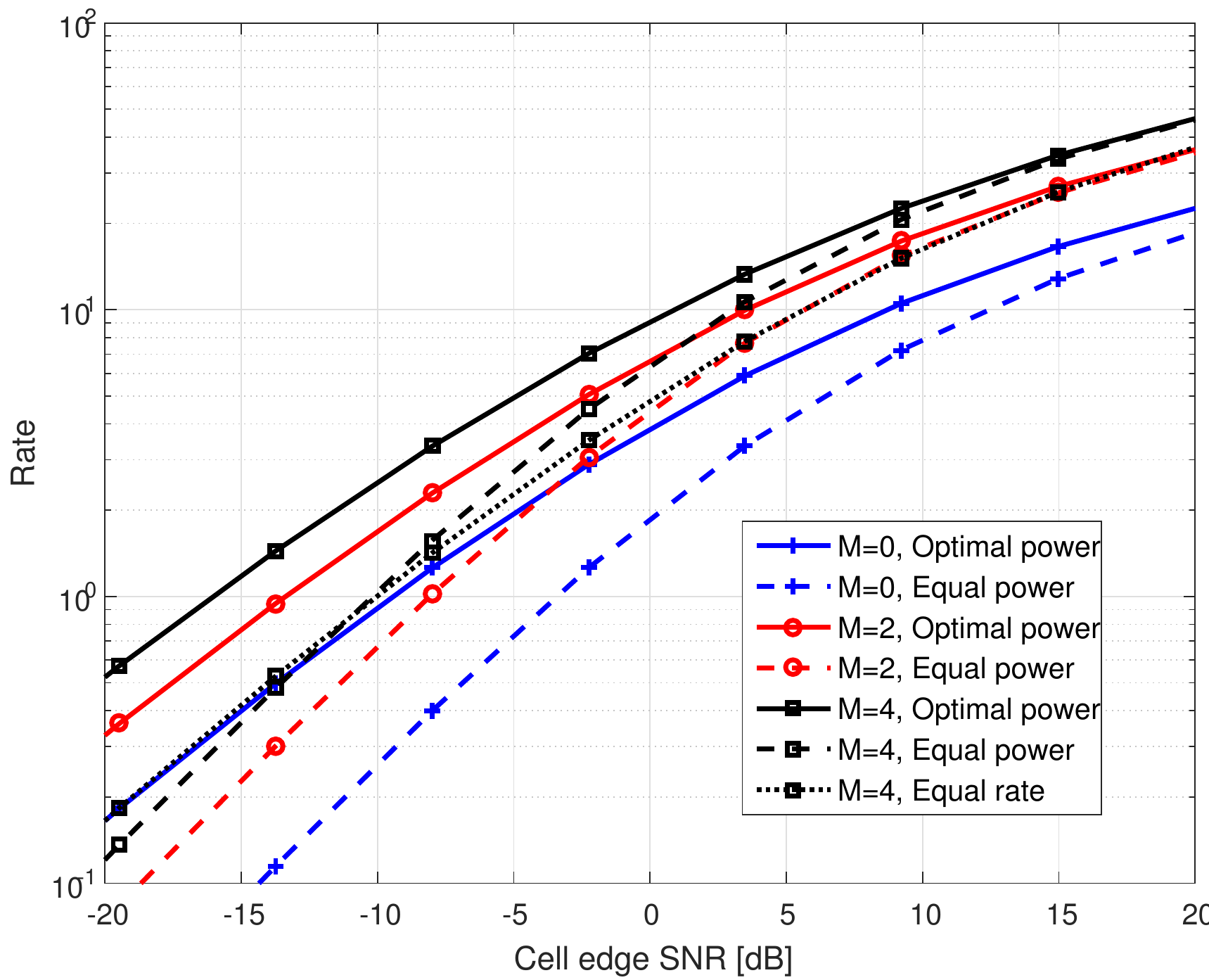}
  \end{center}
\caption{Total network throughput as a function of the received power at cell edge for various cache sizes.}\label{fig:rate_vs_M}
\end{figure}


Fig.~2 depicts the total network throughput as a function of  SNR at the cell edge\footnote{Recall that each user experiences a different SNR. Thus, the SNR at the cell edge is a convenient reference point, even though no user is actually located at the cell edge.}.
In this study, the throughput is defined by the data that is delivered to the users during the delivery phase (not including the data that was previously placed in their cache). The figure depicts the performance for the cases that the overall cache memory at all users contains $M=2$ or $M=4$ copies of the entire database. For reference, the figure also depicts the performance with no cache ($M=0$). Following our assumption in Section~IV, the number of users changes according to the allocated cache size so that $U=M+N$.
The figure depicts the performance with and without caching for three  types of resource allocation. `Optimal power' depicts the performance with the optimal power allocation and optimal cache allocation as described in Section IV. `Equal power' uses the same power for all frequency bins of all users, but optimal cache allocation. `Equal rate'  uses the power allocation that provide equal rate all users (with optimal water-filling intra-user power allocation for the different frequency bins of all users).

Note that for the `Equal rate'  the optimal cache allocation is equal for all users. This scheme characterizes the performance of previously published schemes in which the transmission rate is taken as the minimal achievable rate among the active users (e.g.,  \cite{shariatpanahi2017multi, huang2015performance, ibrahim2017optimization}). Obviously, for such schemes were the performance is bounded by the minimal rate, the optimal power allocation leads to `Equal rate'.

The figure shows that caching and optimal power allocation improve the performance. Yet, the fine details are hard to observe due to the large range of the vertical axis. Fig.~\ref{fig:norm_rate_vs_M} presents the same rates, but in a normalized manner that allows a better inspection. In this figure, each of the total rates  was divided by the total rate in the case of optimal power allocation with no cache available at the users.

\begin{figure}
  \begin{center}
\includegraphics[width=0.45\textwidth]{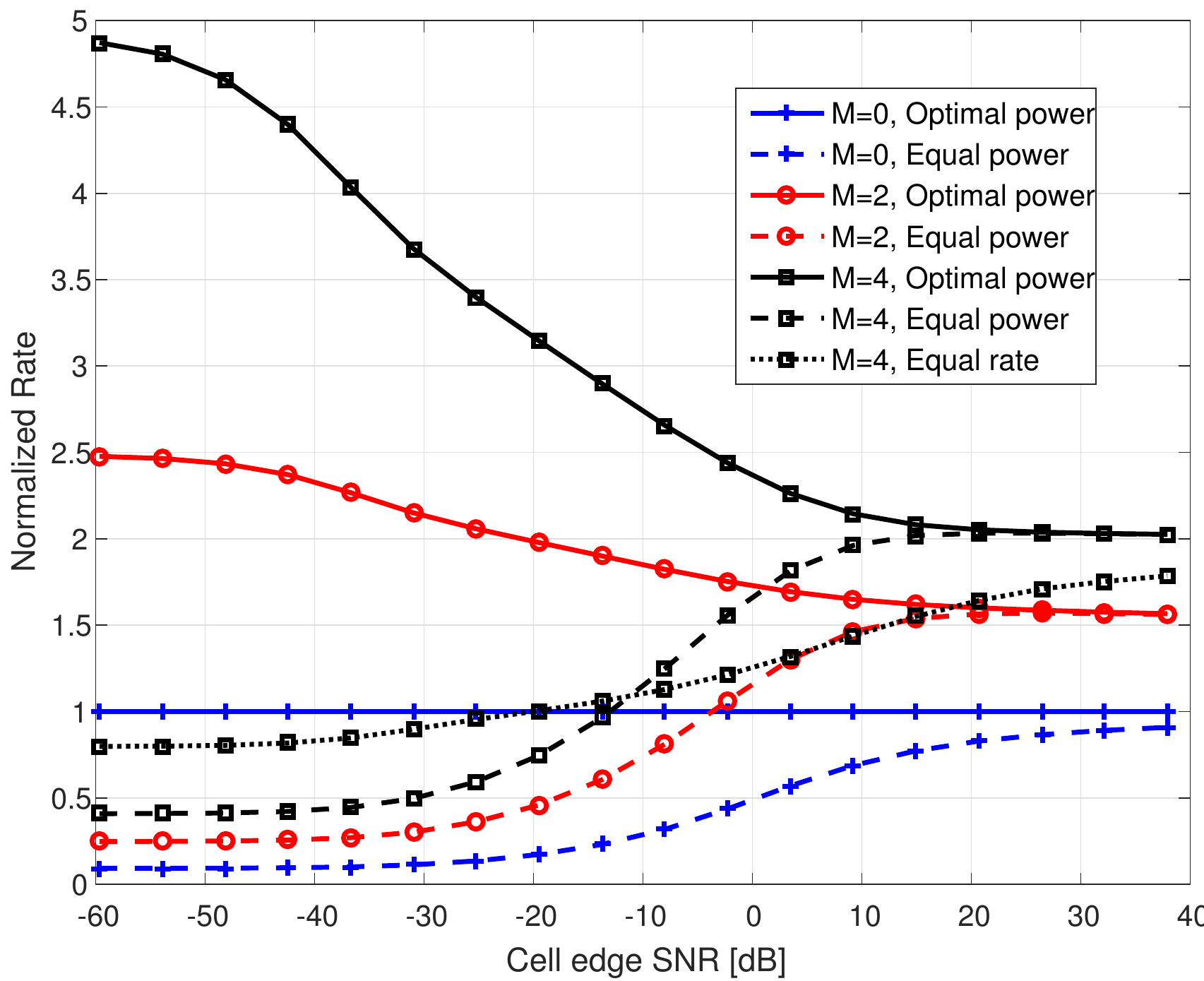}
  \end{center}
\caption{Normalized network throughput as a function of the SNR at cell edge for various cache sizes.}\label{fig:norm_rate_vs_M}
\end{figure}

At high SNR regime, the difference between the channel gain of the different users becomes negligible, and all users approach the same rate. Hence, rate balancing is not critical and we only see the effect of throughput increase. As each scheme allows for serving  $M+N$ users simultaneously, we expect a gain of $(M+N)/N$, which is $1.5$ and $2$ for the case of $M=2$ and $M=4$, respectively. We see that these values are indeed achieved with or without optimal power allocation. This shows that optimal power allocation has a minimal effect on the overall throughput in this regime. 

On the other hand, at low SNR regime, the difference between user rates is significant, and we also see the effect of the rate balancing. It is transparent that the inherent rate balancing effect of the optimized caching scheme leads to a significant increase in the network throughput. Thus, the ability to maximize the throughput while keeping strict fairness gives gains which are  close to twice the throughput gains.

\begin{figure}
  \begin{center}
\includegraphics[width=0.45\textwidth]{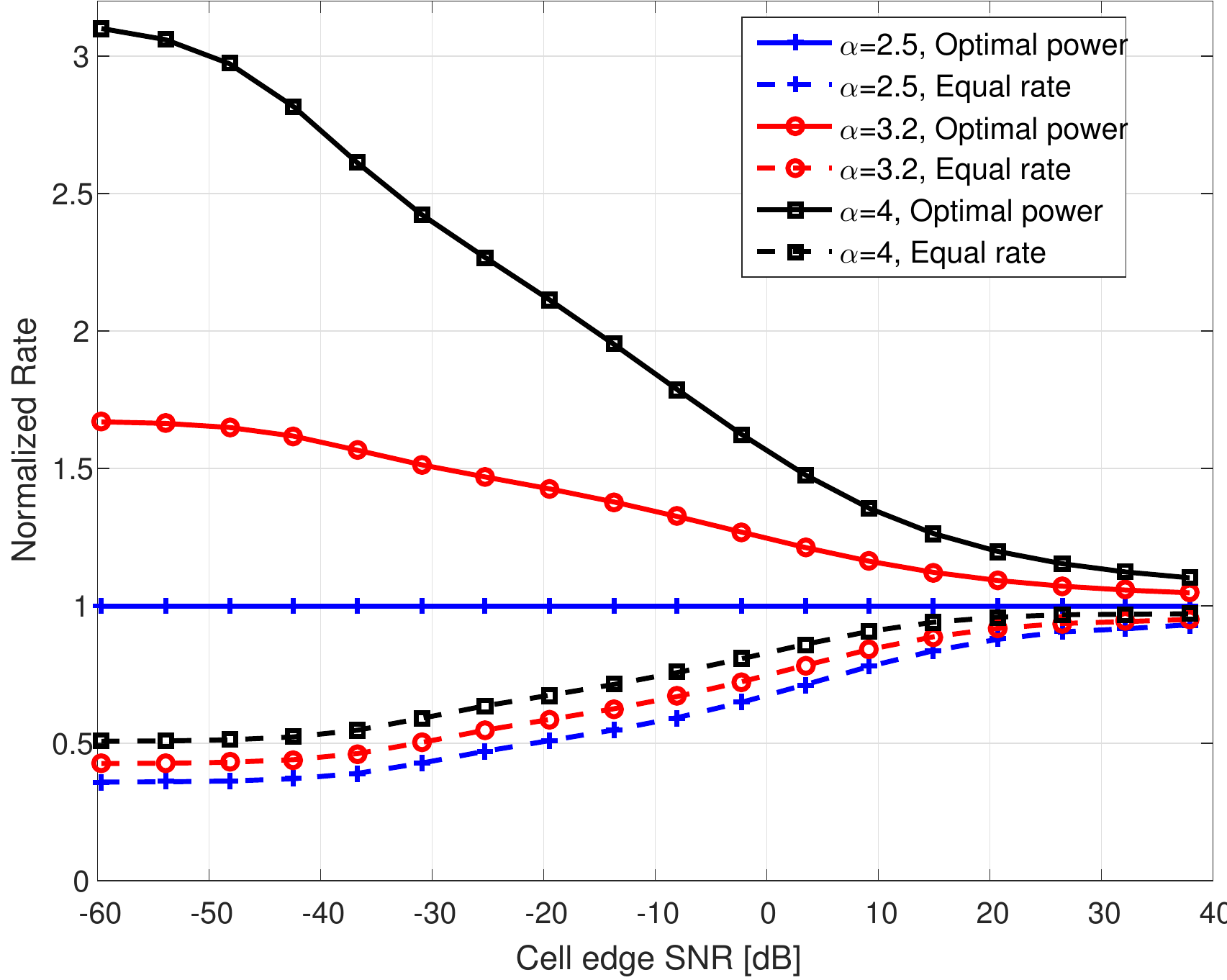}
  \end{center}
\caption{Normalized network throughput as a function of the SNR at cell edge for various path loss exponents.}\label{fig:norm_rate_vs_alpha}
\end{figure}

Note that for $M=0$, the maximum rate constraint in  \eqref{e:Optimization_problm} 
requires that all rates to be equal. This is reasonable as in the absence of cache, we have no balancing  mechanism. Thus, in this case the powers must be set such that all users achieve exactly the same rate. Hence, the performance of the `Optimal power' and the `Equal rate' schemes for $m=0$ are identical. In comparison, the `Equal power' scheme achieves lower rates due to the intra-user power allocation, i.e., the less efficient use of the frequency bins with good fading. Also note that at very low SNR, the `Equal rate' scheme suffers a small decrease in performance with larger cache sizes. This is due to the strict fairness constraint, that requires the system to bring exactly the same rate to a larger number of users simultaneously, while the increase in DoF is meaningless at such low SNRs.


Fig.~\ref{fig:norm_rate_vs_alpha} shows the normalized rate for various values of the path loss exponent, $\alpha$. Again, at high SNR, the rates are almost identical, and all schemes have similar performance. However, for low SNR we see significant difference between the curves. We note that the amplitude variations between the users are more considerable for larger values of the path loss exponent. Thus, a more significant gain of the balancing mechanism can be observed for larger values of $\alpha$. In particular, we see the largest gain for $\alpha=4$, and the second largest gain for $\alpha=3.2$.

\section{Conclusion}
\label{sec: conclusion}

In this paper we studied the cache-aided communication problem for cellular networks. An important feature of the considered model is availability of multiple antennas at the base station and the users. The links between the BS and users are assumed to be asymmetric, and are modeled by wideband fading channels. While it is known that cache and spacial diversity can be traded to achieve DoF, our analysis is not limited to DoF, and we have studied the time of delivery in finite signal-to-noise ratio regime. 

We formulated the cache allocation, cache placement, and delivery scheme as a joint linear program. Even though the number of variables in the LP is large (exponential in problem parameter), the solution is very sparse (the number of  non-zero variables is quadratic in problem parameters), which makes it feasible for practical implementation. 
The suggested scheme is better than previously known schemes as each user can be served at the rate of its own channel, rather than being compromised by the poorest user in the communication group. 

We also considered a special case of the parameters for which  a closed form solution can be obtained. This closed form solution was used to derive the optimal power allocation algorithm. It is shown that the joint optimization of cache usage and power allocation yields a gain for MIMO coded-caching, which goes  beyond the throughput increase. In particular, it is shown that in this case, the cache is used to balance the users such that fairness and throughput are no longer contradicting. More specifically, in this case, strict fairness is achieved jointly with maximizing the network throughput. 

\bibliographystyle{IEEEtran}
\bibliography{Refs}

\begin{thebibliography}{10}
\providecommand{\url}[1]{#1}
\csname url@samestyle\endcsname
\providecommand{\newblock}{\relax}
\providecommand{\bibinfo}[2]{#2}
\providecommand{\BIBentrySTDinterwordspacing}{\spaceskip=0pt\relax}
\providecommand{\BIBentryALTinterwordstretchfactor}{4}
\providecommand{\BIBentryALTinterwordspacing}{\spaceskip=\fontdimen2\font plus
\BIBentryALTinterwordstretchfactor\fontdimen3\font minus
  \fontdimen4\font\relax}
\providecommand{\BIBforeignlanguage}[2]{{%
\expandafter\ifx\csname l@#1\endcsname\relax
\typeout{** WARNING: IEEEtran.bst: No hyphenation pattern has been}%
\typeout{** loaded for the language `#1'. Using the pattern for}%
\typeout{** the default language instead.}%
\else
\language=\csname l@#1\endcsname
\fi
#2}}
\providecommand{\BIBdecl}{\relax}
\BIBdecl

\bibitem{cisco2017global}
C.~V.~N. Index, ``Global mobile data traffic forecast update, 2016--2021,''
  \emph{white paper}, 2017.

\bibitem{sandvine14}
\BIBentryALTinterwordspacing
``Global internet phenomena report: 1h 2014,'' 2014. [Online]. Available:
  \url{https://www.sandvine.com/downloads/general/global-internet-phenomena/2014/1h-2014-global-internet-phenomena-report.pdf}
\BIBentrySTDinterwordspacing

\bibitem{huston1999web}
G.~Huston, ``Web caching,'' \emph{The Internet Protocol Journal}, vol.~2,
  no.~3, pp. 2--20, 1999.

\bibitem{maddah2014fundamental}
M.~A. Maddah-Ali and U.~Niesen, ``Fundamental limits of caching,'' \emph{IEEE
  Trans. Inform. Theory}, vol.~60, no.~5, pp. 2856--2867, 2014.

\bibitem{shariatpanahi2017multi}
S.~P. Shariatpanahi, G.~Caire, and B.~H. Khalaj, ``Multi-antenna coded
  caching,'' \emph{arXiv preprint arXiv:1701.02979}, 2017.

\bibitem{larsson2014massive}
E.~G. Larsson, O.~Edfors, F.~Tufvesson, and T.~L. Marzetta, ``Massive mimo for
  next generation wireless systems,'' \emph{IEEE Communications Magazine},
  vol.~52, no.~2, pp. 186--195, 2014.

\bibitem{lu2014overview}
L.~Lu, G.~Y. Li, A.~L. Swindlehurst, A.~Ashikhmin, and R.~Zhang, ``An overview
  of massive mimo: Benefits and challenges,'' \emph{IEEE journal of selected
  topics in signal processing}, vol.~8, no.~5, pp. 742--758, 2014.

\bibitem{ghasemi2017improved}
H.~Ghasemi and A.~Ramamoorthy, ``Improved lower bounds for coded caching,''
  \emph{IEEE Trans. Inform. Theory}, vol.~63, no.~7, pp. 4388--4413, 2017.

\bibitem{sengupta2015improved}
A.~Sengupta, R.~Tandon, and T.~C. Clancy, ``Improved approximation of
  storage-rate tradeoff for caching via new outer bounds,'' in \emph{IEEE Intl.
  Symposium on Info. Th. (ISIT)}.\hskip 1em plus 0.5em minus 0.4em\relax IEEE,
  2015, pp. 1691--1695.

\bibitem{wang2016new}
C.-Y. Wang, S.~H. Lim, and M.~Gastpar, ``A new converse bound for coded
  caching,'' in \emph{Information Theory and Applications Workshop (ITA),
  2016}.\hskip 1em plus 0.5em minus 0.4em\relax IEEE, 2016, pp. 1--6.

\bibitem{ajaykrishnan2015critical}
N.~Ajaykrishnan, N.~S. Prem, V.~M. Prabhakaran, and R.~Vaze, ``Critical
  database size for effective caching,'' in \emph{Twenty First National
  Conference on Communications (NCC)}.\hskip 1em plus 0.5em minus 0.4em\relax
  IEEE, 2015, pp. 1--6.

\bibitem{tian2015note}
C.~Tian, ``A note on the fundamental limits of coded caching,'' \emph{arXiv
  preprint arXiv:1503.00010}, 2015.

\bibitem{wan2016optimality}
K.~Wan, D.~Tuninetti, and P.~Piantanida, ``On the optimality of uncoded cache
  placement,'' in \emph{IEEE Information Theory Workshop (ITW)}.\hskip 1em plus
  0.5em minus 0.4em\relax IEEE, 2016, pp. 161--165.

\bibitem{yu2016exact}
Q.~Yu, M.~A. Maddah-Ali, and A.~S. Avestimehr, ``The exact rate-memory tradeoff
  for caching with uncoded prefetching,'' \emph{arXiv preprint
  arXiv:1609.07817}, 2016.

\bibitem{chen2014fundamental}
Z.~Chen, P.~Fan, and K.~Ben~Letaief, ``Fundamental limits of caching: Improved
  bounds for small buffer users,'' \emph{arXiv preprint arXiv:1407.1935}, 2014.

\bibitem{tian2016caching}
C.~Tian and J.~Chen, ``Caching and delivery via interference elimination,'' in
  \emph{IEEE International Symposium on Information Theory (ISIT)}.\hskip 1em
  plus 0.5em minus 0.4em\relax IEEE, 2016, pp. 830--834.

\bibitem{sahraei2016k}
S.~Sahraei and M.~Gastpar, ``K users caching two files: An improved achievable
  rate,'' in \emph{Annual Conference on Information Science and Systems
  (CISS)}.\hskip 1em plus 0.5em minus 0.4em\relax Ieee, 2016, pp. 620--624.

\bibitem{amiri2017fundamental}
M.~M. Amiri and D.~G{\"u}nd{\"u}z, ``Fundamental limits of coded caching:
  Improved delivery rate-cache capacity tradeoff,'' \emph{IEEE Trans.
  Communications}, vol.~65, no.~2, pp. 806--815, 2017.

\bibitem{wang2017improved}
C.-Y. Wang, S.~S. Bidokhti, and M.~Wigger, ``Improved converses and gap-results
  for coded caching,'' in \emph{Information Theory (ISIT), 2017 IEEE
  International Symposium on}.\hskip 1em plus 0.5em minus 0.4em\relax IEEE,
  2017, pp. 2428--2432.

\bibitem{yu2017characterizing}
Q.~Yu, M.~A. Maddah-Ali, and A.~S. Avestimehr, ``Characterizing the rate-memory
  tradeoff in cache networks within a factor of 2,'' in \emph{IEEE Intl.
  Symposium on Info. Th. (ISIT)}, 2017, pp. 386--390.

\bibitem{maddah2015decentralized}
M.~A. Maddah-Ali and U.~Niesen, ``Decentralized coded caching attains
  order-optimal memory-rate tradeoff,'' \emph{IEEE/ACM Transactions On
  Networking}, vol.~23, no.~4, pp. 1029--1040, 2015.

\bibitem{bidokhti2017gaussian}
S.~S. Bidokhti, M.~Wigger, and A.~Yener, ``Gaussian broadcast channels with
  receiver cache assignment,'' \emph{submitted to IEEE Intl. Conf. on
  Communications (ICC)}, 2017.

\bibitem{gregori2015joint}
M.~Gregori, J.~G{\'o}mez-Vilardeb{\`o}, J.~Matamoros, and D.~G{\"u}nd{\"u}z,
  ``Joint transmission and caching policy design for energy minimization in the
  wireless backhaul link,'' in \emph{IEEE Intl. Symp. on Info. Th. (ISIT)},
  2015, pp. 1004--1008.

\bibitem{yang2016content}
S.~Yang, K.-H. Ngo, and M.~Kobayashi, ``Content delivery with coded caching and
  massive mimo in 5g,'' in \emph{9th Intl. Symp. on Turbo Codes and Iterative
  Information Processing (ISTC)}.\hskip 1em plus 0.5em minus 0.4em\relax IEEE,
  2016, pp. 370--374.

\bibitem{bidokhti2016noisy}
S.~S. Bidokhti, M.~Wigger, and R.~Timo, ``Noisy broadcast networks with
  receiver caching,'' \emph{arXiv preprint arXiv:1605.02317}, 2016.

\bibitem{bidokhti2016erasure}
------, ``Erasure broadcast networks with receiver caching,'' in \emph{IEEE
  Intl. Symp. on Info. Th. (ISIT)}, 2016, pp. 1819--1823.

\bibitem{ghorbel2016content}
A.~Ghorbel, M.~Kobayashi, and S.~Yang, ``Content delivery in erasure broadcast
  channels with cache and feedback,'' \emph{IEEE Transactions on Information
  Theory}, vol.~62, no.~11, pp. 6407--6422, 2016.

\bibitem{shariatpanahi2016multi}
S.~P. Shariatpanahi, S.~A. Motahari, and B.~H. Khalaj, ``Multi-server coded
  caching,'' \emph{IEEE Transactions on Information Theory}, vol.~62, no.~12,
  pp. 7253--7271, 2016.

\bibitem{ngo2018scalable}
K.-H. Ngo, S.~Yang, and M.~Kobayashi, ``Scalable content delivery with coded
  caching in multi-antenna fading channels,'' \emph{IEEE Transactions on
  Wireless Communications}, vol.~17, no.~1, pp. 548--562, 2018.

\bibitem{ibrahim2017optimization}
A.~M. Ibrahim, A.~A. Zewail, and A.~Yener, ``Optimization of heterogeneous
  caching systems with rate limited links,'' in \emph{Communications (ICC),
  2017 IEEE International Conference on}.\hskip 1em plus 0.5em minus
  0.4em\relax IEEE, 2017, pp. 1--6.

\bibitem{bidokhti2017benefits}
S.~S. Bidokhti, M.~Wigger, and A.~Yener, ``Benefits of cache assignment on
  degraded broadcast channels,'' in \emph{Information Theory (ISIT), 2017 IEEE
  International Symposium on}.\hskip 1em plus 0.5em minus 0.4em\relax IEEE,
  2017, pp. 1222--1226.

\bibitem{bidokhti2018state}
S.~S. Bidokhti, M.~Wigger, A.~Yener, and A.~E. Gamal, ``State-adaptive coded
  caching for symmetric broadcast channels,'' \emph{arXiv preprint
  arXiv:1802.00319}, 2018.

\bibitem{naderializadeh2017optimality}
N.~Naderializadeh, M.~A. Maddah-Ali, and A.~S. Avestimehr, ``On the optimality
  of separation between caching and delivery in general cache networks,'' in
  \emph{IEEE Intl. Symp. on Info. Th. (ISIT)}, 2017, pp. 1232--1236.

\bibitem{spencer2004zero}
Q.~H. Spencer, A.~L. Swindlehurst, and M.~Haardt, ``Zero-forcing methods for
  downlink spatial multiplexing in multiuser mimo channels,'' \emph{IEEE Trans.
  on Signal Processing}, vol.~52, no.~2, pp. 461--471, 2004.

\bibitem{boyd2004convex}
S.~Boyd and L.~Vandenberghe, \emph{Convex optimization}.\hskip 1em plus 0.5em
  minus 0.4em\relax Cambridge university press, 2004.

\bibitem{huang2015performance}
W.~Huang, S.~Wang, L.~Ding, F.~Yang, and W.~Zhang, ``The performance analysis
  of coded cache in wireless fading channel,'' \emph{arXiv preprint
  arXiv:1504.01452}, 2015.

\end{thebibliography}

\end{document}